\begin{document}

\maketitle

\begin{abstract}
    Prediction markets are useful for estimating probabilities of claims whose truth will be revealed at some fixed time -- this includes questions about the values of real-world events (i.e. statistical uncertainty), and questions about the values of primitive recursive functions (i.e. logical or algorithmic uncertainty). However, they cannot be directly applied to questions without a fixed resolution criterion, and real-world applications of prediction markets to such questions often amount to predicting not whether a sentence is true, but whether it will be proven. Such questions could be represented by countable unions or intersections of more basic events, or as First-Order-Logic sentences on the Arithmetical Hierarchy (or even beyond FOL, as hyperarithmetical sentences). In this paper, we propose an approach to betting on such events via options, or equivalently as bets on the outcome of a ``verification-falsification game''. Our work thus acts as an alternative to the existing framework of Garrabrant induction for logical uncertainty, and relates to the stance known as \emph{constructivism} in the philosophy of mathematics; furthermore it has broader implications for philosophy and mathematical logic.
\end{abstract}

\section{Introduction}

Prediction markets are traditionally used for eliciting probabilities on claims whose truth will be revealed at some fixed point \cite{conitzerPredictionMarketsMechanism2012, hansonLogarithmicMarketScoring2002, hansonCombinatorialInformationMarket2003}; they can also straightforwardly be extended to \emph{verifiable claims} (claims that if true, will be revealed to be true) and \emph{falsifiable claims} (claims that if false, will be revealed to be false). For example, ``Bob will die'' is verifiable, and the price of the asset that pays \$1 if Bob dies reflects the probability that he will die; ``Bob is immortal'' is falsifiable, and the price of the asset that immediately pays \$1 which must be returned if Bob dies reflects the probability that he is immortal.\footnote{Such an asset may be traded on a market as defined by standard mechanisms such as continuous double auctions or logarithmic market scoring. For the details of such algorithms see \cite{conitzerPredictionMarketsMechanism2012, hansonLogarithmicMarketScoring2002, hansonCombinatorialInformationMarket2003} or real-life prediction markets such as PredictIt, SMarkets, or the play-money market Manifold.}

Although falsifiability occupies a central role in the imagination of philosophers as to what constitutes a meaningful sentence \cite{thorntonKarlPopper2023}, a much wider class of sentences are treated in science and in common practice generally; such classes of sentences are described by various \emph{logics}. In this paper we will be concerned with designing prediction markets for sentences in \emph{first-order logic} (FOL), though our results may also be straightforwardly generalized to a hyperarithmetical logic up to any computable ordinal (see e.g. \cite{pohlersComputabilityTheoryHyperarithmetical1999, ashComputableStructuresHyperarithmetical2000, moschovakisDescriptiveSetTheory2009, kechrisClassicalDescriptiveSet1995} for standard introductions); Fig~\ref{fig:fol} shows the ``arithmetical hierarchy'' of First-Order Logic sentences.

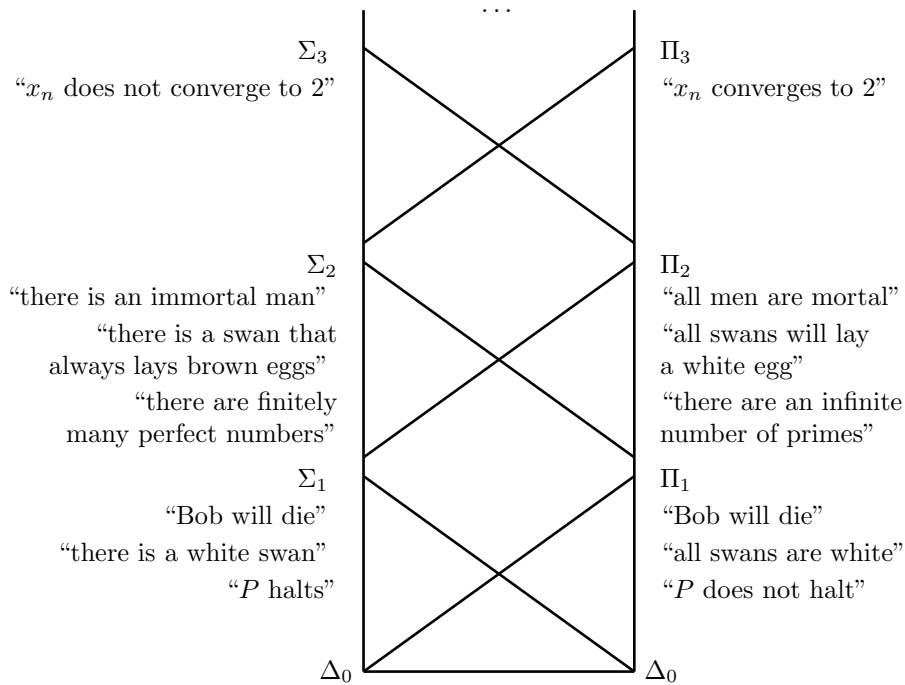
\begin{figure}
    \centering
    \usetikzlibrary{arrows}
\usetikzlibrary{positioning}
\pgfmathsetmacro{\myw}{3.6}
\pgfmathsetmacro{\myh}{2.6}
\pgfmathsetmacro{\mys}{0.25}
\newlength{\mypad}
\setlength{\mypad}{8pt}
\newlength{\mylsg}
\setlength{\mylsg}{2pt}

\begin{tikzpicture}[line width =1pt]

\draw (0,0) -- (0,3*\myh+4*\mys);
\draw (\myw,0) -- (\myw,3*\myh+4*\mys);
\node[] at (\myw/2,3*\myh+4*\mys){$\dots$};
\draw node[above = \mypad, anchor = north east]{$\Delta_0$} (0,0) -- (\myw,0) node[above = \mypad, anchor = north west]{$\Delta_0$};
\draw (0,0) -- (\myw,\myh) node[above = \mypad, anchor = north west]{
    \begin{tabular}{l}
        $\Pi_1$ \vspace{\mylsg}\\
        ``Bob will die'' \vspace{\mylsg}\\
        ``all swans are white'' \vspace{\mylsg}\\
        ``$P$ does not halt''
    \end{tabular}};
\draw (\myw,0) -- (0,\myh) node[above = \mypad, anchor = north east]{
    \begin{tabular}{r}
        $\Sigma_1$ \vspace{\mylsg}\\
        ``Bob will die'' \vspace{\mylsg}\\
        ``there is a white swan'' \vspace{\mylsg}\\
        ``$P$ halts''
    \end{tabular}};
\draw (0,\myh+\mys) -- (\myw,2*\myh+\mys) node[above = \mypad, anchor = north west]{
    \begin{tabular}{l}
        $\Pi_2$ \\
        ``all men are mortal'' \vspace{\mylsg}\\
        ``all swans will lay\\
        a white egg'' \vspace{\mylsg}\\
        ``there are an infinite\\
        number of primes''
    \end{tabular}};
\draw (\myw,\myh+\mys) -- (0,2*\myh+\mys) node[above = \mypad, anchor = north east]{
    \begin{tabular}{r}
        $\Sigma_2$ \\
        ``there is an immortal man'' \vspace{\mylsg}\\
        ``there is a swan that\\
        always lays brown eggs'' \vspace{\mylsg}\\
        ``there are finitely\\
        many perfect numbers''
    \end{tabular}};
\draw (0,2*\myh+2*\mys) -- (\myw,3*\myh+2*\mys) node[above = \mypad, anchor = north west]{
    \begin{tabular}{l}
        $\Pi_3$ \vspace{\mylsg}\\
        ``$x_n$ converges to 2''
    \end{tabular}};
\draw (\myw,2*\myh+2*\mys) -- (0,3*\myh+2*\mys) node[above = \mypad, anchor = north east]{
    \begin{tabular}{r}
        $\Sigma_3$ \vspace{\mylsg}\\
        ``$x_n$ does not converge to 2''
    \end{tabular}};
\end{tikzpicture}
    \caption{The ``arithmetical hierarchy'' of FOL sentences. A $\Sigma_{n+1}$ sentence is of the form $\exists x,p(x)$ where $p$ is $\Pi_n$; a $\Pi_{n+1}$ sentence is of the form $\forall x,p(x)$ where $p$ is $\Sigma_n$. In logic, the lowest level of the arithmetical hierarchy $\Delta_0=\Sigma_0=\Pi_0$ are sentences of the form $f(n)=0$ where $f$ is a primitive recursive function, but this formalism may be extended to empirical truths that will be revealed at a fixed time.}
    \label{fig:fol}
\end{figure}

We will first, in Sections 1.1--1.3, discuss three apparent approaches to our problem, and explain why they fail or are inapplicable to us. In Section~\ref{sec:framework} we will then present our own framework, and in Section~\ref{sec:results} we will prove some basic results pertaining to it.

\subsection{Garrabrant induction}
\label{sec:garrabrant}

A related problem is that of assigning probabilities to sentences of formal mathematics, i.e. \emph{logical uncertainty}. While this neither subsumes nor is subsumed by our problem (there are non-FOL mathematical theories, and conversely we may consider FOL sentences over e.g. empirical events rather than a formal language), there is substantial overlap, and it is worth considering if the methods in that area can be transferred to our problem. The crown jewel of the logical uncertainty literature, \emph{Garrabrant induction} \cite{garrabrantLogicalInduction2016}, in fact implements a prediction market for all logical sentences which pays off whenever a sentence is proven by some formal theorem prover (which exists for any ``computably enumerable theoory'', including first-order arithmetic etc.). However, this would simply be a prediction market for the \emph{provability} of a sentence (in some particular formal theory); the actual Garrabrant induction framework instead adopts a notion of ``propositionally consistent worlds'' that allows even unprovable (and provably unprovable) sentences to have non-zero probabilities and such that basic probabilistic laws like $\Prob(P)+\Prob(\lnot P)=1$ are followed.

The idea is as follows: if $P\lor Q$ is proven, then an agent that holds a stock each in sentences $P$ and $Q$ must necessarily be regarded as having \emph{at least} \$1 between these assets. Similarly if $\lnot P\lor\lnot Q$ is proven, then that agent has \emph{at most} \$1 between these assets. We call these different logical possibilities ``worlds'' or rather ``worlds that are propositionally consistent with the output of the theorem prover so far'' (PC worlds for short). In particular when evaluating whether a trade is within budget (to be accepted by the market-maker), we demand that it is within budget in all PC worlds. More precisely:

\begin{definition}[Worlds and valuations]
    Let $\inventory:\Props\too\Rats$ be a finite-supported map, and let $\world:\Props\to\Bools$ be a \emph{world}, i.e. a truth-assignment, so the dot product $\world\cdot\inventory$ represents the valuation of $\inventory$ according to $\world$.
\end{definition}

\begin{definition}[PC worlds]
     A world is said to be propositionally consistent (PC) if for all $P\in\Props$, $\world(P)$ is determined by Boolean algebra from the prime sentences in $\Props$, i.e. $\world(P\land Q)=\world(P)\land \world(Q)$, $\world(P\lor Q)=\world(P)\lor \world(Q)$, etc. Furthermore, let $\thms_t$ be the subset of $\Props$ proven by time $t$: then a world is said to be PC with $\thms_t$ if (1) it is PC and (2) $\world(P)=1$ for all $P\in\thms_t$. We denote the set of worlds PC with $\thms_t$ as $\PC(\thms_t)$; in particular $\PC:=\PC(\{\})$. For any $\inventory$, denote its set of plausible valuations as $\PC(\thms_t,\inventory):=\{\world\cdot\inventory\mid \world\in\PC(\thms_t)\}$.
\end{definition}

Note that $\PC(\thms_t,\inventory)$ is computable, since you only have to check propositional consistency for the sentences actually supported by $\inventory$. 

Another quirk of Garrabrant induction is that agents have type $\agent:\Nats\to(\Props\to\probs)\to(\Props\to\Rats)$ i.e. they output ``joint'' demand schedules allowing for cross-elasticity of demand rather than leaving this to be figured out by the market dynamics. Naturally, calculating equilibrium between such demand schedules is non-trivial and requires Brouwer's fixed point theorem, as well as constraints on $\agent$ (specifically that $\agent(t)$ is continuous in price and comprised only of some particularly simple expressions depending only on some external information like price history). The framework actually implements a rational approximation of the equilibrium computed via a brute-force Farey enumeration of all rational numbers.

\begin{definition}[Garrabrant induction]
    Fix a language $\Props$, a theorem enumerator $\thms:\Nats\to\finset\Props$ (obeying in particular $s\le t\implies\thms_s\subseteq\thms_t$) and enumerator of agents $\father:\Nats\to\agent[s]\times\Rats$ such that $\father_1$ is bijective and $\sum_t\father_2(t)<\infty$ (where $\father=(\father_1,\father_2)$ and $\agent[s]$ is the type specified earlier $\Nats\to(\Props\to\probs)\to(\Props\to\Rats)$). Then the Garrabrant induction algorithm is given by the following mutual recursion:
    \begin{itemize}
        \item an ``aggregate trader'' $$\agent[m](t,\pricee):=\sum_{\agent\in\{\father(1)\dots\father(t)\}}\indicate{\min\PC(\thms_t,\inventory(t,\pricee) + \agent(t,\pricee))\ge 0}\agent(t,\pricee)$$ where $\indicate{}$ denotes an indicator function
        \item an ``equilibrium price'' $\pricee(t)$, which approximates a zero of $\agent[m](t)$ i.e. so that $\agent[m](t,\pricee(t))\approx 0$ (in particular the error should be $\le 1/2^t$)
        \item an inventory account $\inventory(t,P)$, as computed by the following algorithm:
    \end{itemize}
    
    \begin{algorithm}
    \begin{algorithmic}
    \Function{$\inventory$}{$\tau$}
        \For{$t\le\tau$}
            \State $\inventory[\father_1(t)](\top)\gets\father(t)_2$
            \For{$X\in\supp{\inventory}$} \Comment{resolve proven sentences}
                \If{$X\in\thms_t$}
                    \State $\inventory(\top) \gets \inventory(\top)+\inventory(X)$
                    \State $\inventory(X)\gets 0$
                \EndIf
            \EndFor
            \For{$X\in\supp\agent$}
                \If{$\min\PC(\thms_t,\inventory(t,\pricee) + \agent(t,\pricee))\ge 0$}
                    \State $\inventory\gets\inventory+\agent(t)$ \Comment{add trade to inventory}
                \EndIf
            \EndFor
        \EndFor
        \State \Return $\inventory$
    \EndFunction
    \end{algorithmic}
    \end{algorithm}
    \label{def:garrabrant}
\end{definition}

[A slight difference in the formulation we have presented is that we put the onus on the individual agents to calculate and include the cash payment within their orders, and the aggregate agent just zeroes their trade if they submit an invalid trade; this makes no difference to the whole algorithm.]

\begin{theorem}[Inexploitability]
    With definitions as in Def~\ref{def:garrabrant}, \emph{no trader $\agent\in\agent[s]$ can exploit the price sequence $\pricee$}, i.e. $\PC(\thms_t,\inventory(t))$ remains bounded from above. 
    \label{thm:garrabrant}
\end{theorem}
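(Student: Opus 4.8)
The plan is to run a ``solvency plus market-clearing'' sandwich. On one side, every agent's sub-inventory stays non-negative in every plausible world; on the other, the equilibrium prices force the aggregate net trade at step $t$ to have magnitude at most $2^{-t}$, so the grand total of all inventories has plausible value bounded by the total endowment plus a convergent error series. Subtracting off the (non-negative) contributions of the other agents then isolates any fixed trader and bounds it from above, which is exactly the assertion to be proven.

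Before the main line I would record two routine reductions. First, the ``resolve proven sentences'' block is invisible to every $\PC(\thms_t,\cdot)$: if $X\in\thms_t$ then $\world(X)=1=\world(\top)$ for all $\world\in\PC(\thms_t)$, so moving holdings of $X$ into the $\top$-account leaves $\world\cdot\inventory[\agent]$ fixed, and we may argue as if resolution never happened. Second, monotonicity of the enumerator, $\thms_s\subseteq\thms_t$, gives $\PC(\thms_t)\subseteq\PC(\thms_s)$ and hence $\min\PC(\thms_t,v)\ge\min\PC(\thms_s,v)$ for any fixed $v$ -- non-negativity, once true against $\thms_s$, persists against all later $\thms_t$. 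With these in hand I would prove the \emph{solvency invariant} by induction on $t$: for every agent $\agent$, $\min\PC(\thms_t,\inventory[\agent](t))\ge 0$, where $\inventory[\agent](t)$ is that agent's sub-inventory at the start of step $t$. At $\agent$'s birth step -- unique, since $\father_1$ is a bijection -- it holds only $\father_2(\cdot)\ge 0$ in cash. At a later step, the trade $\agent(t,\pricee(t))$ is appended exactly when $\min\PC(\thms_t,\inventory[\agent](t)+\agent(t,\pricee(t)))\ge 0$ and is otherwise discarded leaving the sub-inventory untouched; in either case the two reductions carry non-negativity from $\thms_t$ to $\thms_{t+1}$. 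So every agent's plausible valuations stay in $[0,\infty)$.

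Now the core estimate. Put $\Sigma(t):=\sum_{\agent}\inventory[\agent](t)$. From step $t$ to step $t+1$, $\Sigma$ changes -- up to value-neutral resolution -- by the fresh endowment $\father_2(t+1)$ in the $\top$-coordinate, plus the sum of all \emph{accepted} step-$t$ trades, and that sum is precisely $\agent[m](t,\pricee(t))$ because the indicator inside $\agent[m]$ selects exactly the accepted trades (the acceptance test for one agent reads only that agent's own sub-inventory, so the order of processing is irrelevant). Since $\pricee(t)$ is chosen so that $\agent[m](t,\pricee(t))$ is within $2^{-t}$ of the zero map and any world is $\{0,1\}$-valued, $|\world\cdot\agent[m](t,\pricee(t))|\le\varepsilon_t$ with $\sum_t\varepsilon_t<\infty$. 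Telescoping, for every $t$ and every $\world\in\PC(\thms_t)$,
\[
\world\cdot\Sigma(t)\ \le\ \sum_{s}\father_2(s)\ +\ \sum_{s}\varepsilon_s\ =:\ K\ <\ \infty .
\]
Fixing a trader $\agent$ and using $\world\cdot\inventory[\agent'](t)\ge 0$ for every other agent $\agent'$, one gets $\world\cdot\inventory[\agent](t)=\world\cdot\Sigma(t)-\sum_{\agent'\ne\agent}\world\cdot\inventory[\agent'](t)\le\world\cdot\Sigma(t)\le K$, so $\PC(\thms_t,\inventory[\agent](t))\subseteq[0,K]$ for all $t$; the same display also bounds $\PC(\thms_t,\Sigma(t))$ above, so in either reading of $\inventory(t)$ we obtain the stated inexploitability of $\pricee$.

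The step I expect to need the most care is the error term $\varepsilon_t$. If the ``$\le 2^{-t}$'' guarantee on the approximate equilibrium is stated coordinatewise rather than in $\ell_1$, then to keep $\sum_t\varepsilon_t$ finite one must additionally use that at stage $t$ only finitely many sentences are in play -- the union of the supports of $\agent(1),\dots,\agent(t)$ is finite, the very fact that made $\PC(\thms_t,\cdot)$ computable -- together with whatever a priori growth bound on that union the framework supplies. The remaining points -- well-foundedness of the mutual recursion (the acceptance test at step $t$ reads only the pre-step-$t$ state, and $\pricee(t)$ is chosen afterwards), and the fact that bijectivity of $\father_1$ means every trader of type $\agent[s]$ is genuinely placed in the market with some finite endowment -- are immediate from the definitions.
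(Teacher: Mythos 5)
Your proposal is correct and follows essentially the same route as the paper's (sketch) proof: the aggregate trader's plausible value is controlled because prices are set at an approximate equilibrium for $\agent[m]$, and an individual trader's value is then isolated by using that every other trader's sub-inventory is kept non-negative in all PC worlds by the budget check. Your version is simply a direct, quantitative rendering of that sandwich (total endowment plus the convergent $2^{-t}$ error series), and the one point you flag as delicate --- making the per-step equilibrium error summable in the right norm --- is indeed the only detail the paper's sketch glosses over.
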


\begin{hproof}
    If any trader could exploit the price sequence, so could the aggregate trader $\agent[m]$ (as whichever trader exploits the market would also grow its proportion of $\agent[m]$'s trades to $+\infty$, and no trader can run $\agent[m]$'s finances to $-\infty$ since it would just go bankrupt). However, $\agent[m]$ cannot exploit $\pricee$ as it is set to be (arbitrary close to) an equilibrium for $\agent[m]$.
\end{hproof}

Indeed, Garrabrant induction may be applied to some FOL theory and be seen as a competing framework to ours -- however, it has crucial limitations:
\begin{itemize}
    \item It does not exploit the \emph{FOL structure} of the language, so it doesn't tell us anything about the relationship between the sentences $P(x)$ and $\exists x, P(x)$ as e.g. probability theory would.
    \item It is fundamentally ``theory-dependent'', which leaves some foundational questions unanswered from a philosophical perspective: it doesn't address the question of \emph{why} we give credence to particular formal theories for forming our beliefs (ideally, our answer to this should look like ``because formal theories make a lot of money on the market''). Although this may seem like a rather detached philosophical problem, it is likely relevant should we seek to design intelligent agents that form beliefs about mathematical truths (for general examples of market-based designs of decision-making agents see \cite{oesterheldTheoryBoundedInductive2023, changDecentralizedReinforcementLearning2020, kweeMarketBasedReinforcementLearning2001, baumModelIntelligenceEconomy1999, balduzziCorticalPredictionMarkets2014}), as we would like the agent to be able to reflect beyond the computational limitations of a particular theory.
    \item It cannot be readily extended beyond the realm of formal logic: while \cite{garrabrantLogicalInduction2016} does make note that the framework performs something akin to empirical induction for sentences independent of a theory, this notion of empirical induction once again does not exploit the FOL structure of the language.
\end{itemize}

\subsection{Bounded rationality}
\label{sec:br}

The relevance of logical uncertainty has also been occassionally remarked on in the \emph{bounded rationality} literature: e.g. by Halpern \cite{halpernAlgorithmicRationalityGame2014, halpernDonWantThink2011} and Oesterheld \cite{oesterheldTheoryBoundedInductive2023} who gave examples of games and decision problems that depended on logical questions. However, in all these examples, the logical questions were only $\Delta_0$, as the games therein have to pay out a reward at a fixed point depending on the answer to the question.

\subsection{Partial approximations}
\label{sec:approx}

An immediately apparent solution to our problem might be to approximate first-order logic sentences with bounded comprehensions -- e.g. for $\exists x,\forall y, P(x,y)$ consider the asset that on day $t$ is equal to (can be exchanged for) the $\Delta_0$ sentence $\exists x\le t,\forall y\le t, P(x,y)$. However, a simple counter-example is provided by $P(x,y):=x\ge y$ (see Fig~\ref{fig:checkerboard} for a simple illustration), and Lemma~\ref{lem:nogo} proves that no such straightforward computable algorithm can address our problem.

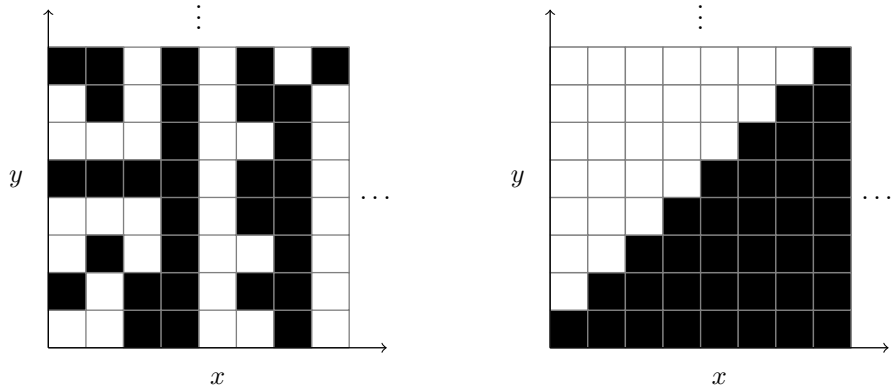
\begin{figure}
    \centering
    \begin{subfigure}{0.45\textwidth}
        \centering
        \usetikzlibrary{arrows}
\usetikzlibrary{positioning}

\begin{tikzpicture}[line width =0.5pt, scale=0.5]

\draw[gray] (0,0) grid (8,8);
\foreach \x/\y in {0/1, 0/4, 0/7, 1/2, 1/4, 1/6, 1/7, 2/0, 2/1, 2/4, 3/0, 3/1, 3/2, 3/3, 3/4, 3/5, 3/6, 3/7, 5/1, 5/3, 5/4, 5/6, 5/7, 6/0, 6/1, 6/2, 6/3, 6/4, 6/5, 6/6, 7/7}
    \filldraw[fill=black, draw=gray] (\x,\y) rectangle ++(1,1);
\draw[->] (0,0) -- (9,0) node[midway, below=6pt]{$x$};
\draw[->] (0,0) -- (0,9) node[midway, left=6pt]{$y$};
\node[] at (4,9){$\vdots$};
\node[] at (8.75,4){$\dots$};
\end{tikzpicture}
        \label{fig:checkerboard1}
    \end{subfigure}\hfill%
    \begin{subfigure}{0.45\textwidth}
        \centering
        \usetikzlibrary{arrows}
\usetikzlibrary{positioning}

\begin{tikzpicture}[line width =0.5pt, scale=0.5]

\draw[gray] (0,0) grid (8,8);
\foreach \x in {0, 1, 2, 3, 4, 5, 6, 7}
    \foreach \y in {0, 1, 2, 3, 4, 5, 6, 7}
        \ifthenelse{\x<\y}{\def\col{white}}{\def\col{black}}
        \filldraw[fill=\col, draw=gray] (\x,\y) rectangle ++(1,1);

\draw[->] (0,0) -- (9,0) node[midway, below=6pt]{$x$};
\draw[->] (0,0) -- (0,9) node[midway, left=6pt]{$y$};
\node[] at (4,9){$\vdots$};
\node[] at (8.75,4){$\dots$};
\end{tikzpicture}
        \label{fig:checkerboard2}
    \end{subfigure}
    \caption{Shading $P(x,y)$ as a checkerboard, the $\Sigma_2$ sentence $\exists x,\forall y, P(x,y)$ can be interpreted as ``there is an infinite black column''. Left: an arbitrary example; Right: the example of $P(x,y):=x\ge y$, where the $\Sigma_2$ sentence is false, yet it is true for every ``finite window''.}
    \label{fig:checkerboard}
\end{figure}

\begin{lemma}[the problem is not trivial]
    An asset mechanism is a computable real-valued function $v:\Nats\times\Props\to\mathbb{R}$ such that $\lim_{t\to\infty}v(t,P)=1$ if $P$ and $0$ if $\lnot P$. A scoring rule mechanism is a computable real-valued function $s:\Nats\times\Props$ such that $\lim_{t\to\infty}s(t,P,p)=\log(p)$ if $P$ and $\log(1-p)$ if $\lnot P$. If $\Props$, the class of propositions considered, includes at least $\Sigma_4$ or $\Pi_4$ sentences, neither an asset mechanism nor a scoring rule mechanism exists. 
    \label{lem:nogo}
\end{lemma}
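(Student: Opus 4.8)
The plan is to reduce the nonexistence of both mechanisms to a standard fact about the arithmetical hierarchy, via the Shoenfield limit lemma: any such mechanism would compute the $\Sigma_4$-truth predicate ``in the limit'', placing a $\Sigma^0_4$-complete set inside $\Delta^0_2$, which is impossible. Concretely, fix once and for all a $\Sigma_4$ formula $\psi(x)$ with one free variable such that $S := \{n \in \Nats : \psi(\bar n) \text{ is true in the standard model}\}$ is $\Sigma^0_4$-complete -- for instance take $S = \emptyset^{(4)}$, which by the arithmetical hierarchy (normal form plus Post's) theorem is defined by such a $\psi$ over the primitive recursive $\Delta_0$ base, and which satisfies $\emptyset^{(4)} \geq_T \emptyset'' >_T \emptyset'$, hence $S \notin \Delta^0_2$. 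Reading the hypothesis ``$\Props$ includes at least $\Sigma_4$ sentences'' as ``$\Props$ contains every $\Sigma_4$ sentence'' (a single sentence could not carry a complete set), $\Props$ then contains every $\psi(\bar n)$, and the map $n \mapsto \psi(\bar n)$ is computable.

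Now suppose an asset mechanism $v$ exists. Since $v$ is uniformly computable as a real, from $n$ and $t$ we may compute a rational $q(n,t)$ with $|q(n,t) - v(t, \psi(\bar n))| \le 1/8$; define $g(n,t) := 1$ if $q(n,t) \ge 1/2$ and $g(n,t) := 0$ otherwise, so $g$ is total computable. If $n \in S$ then $v(t, \psi(\bar n)) \to 1$, so eventually $q(n,t) > 3/4$ and $g(n,t) = 1$; if $n \notin S$ then $v(t, \psi(\bar n)) \to 0$, so eventually $q(n,t) < 1/4$ and $g(n,t) = 0$. Either way $\lim_t g(n,t)$ exists and equals $\chi_S(n)$. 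The scoring-rule case is the same after fixing any $p \in \probs$ with $p \ne 1/2$, say $p = 3/4$: then $\lim_t s(t, \psi(\bar n), p)$ equals $\log p$ if $n \in S$ and $\log(1-p)$ if $n \notin S$, and since $\log(1-p) \ne \log p$ we may pick a rational $r$ strictly between them and set $g(n,t) := 1$ iff a sufficiently good rational approximation of $s(t, \psi(\bar n), p)$ exceeds $r$, again obtaining a total computable $g$ with $\lim_t g(n,t) = \chi_S(n)$.

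By the limit lemma, the existence of such a $g$ gives $S \leq_T \emptyset'$, i.e. $S \in \Delta^0_2$ -- contradicting $S \notin \Delta^0_2$. This settles the $\Sigma_4$ case; for $\Pi_4$ run the identical argument with $S$ a $\Pi^0_4$-complete set such as $\overline{\emptyset^{(4)}}$, defined by a $\Pi_4$ formula, using that $\Delta^0_2$ is closed under complement.

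There is no genuinely hard step here: the whole argument is the limit lemma together with the arithmetical hierarchy theorem, and the only points requiring care are (i) passing from the real-valued $v$ (resp.\ $s$) to a truly two-valued computable sequence, which is exactly where the gap between the two admissible limits is used, and (ii) checking that the limit lemma delivers membership in $\Delta^0_2$ -- a set below $\emptyset'$ -- and not merely in $\Sigma^0_2$ or $\Pi^0_2$ (indeed the same reasoning already rules out both mechanisms once $\Props$ contains all $\Sigma_2$ sentences, since $\emptyset''$ is $\Sigma^0_2$-complete but not limit-computable; level $4$ is stated only because that is all we shall need). The role of this lemma is precisely to show that no adequate mechanism can be a plain computable pricing function -- which is what motivates the option-based (verification--falsification game) framework of Section~\ref{sec:framework}.
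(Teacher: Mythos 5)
Your proof is correct and rests on the same underlying idea as the paper's: the limiting behaviour of a computable function sits too low in the arithmetical hierarchy to decide $\Sigma_4$ truth. The paper's own argument is a two-line quantifier count -- the statement $\lim_{t\to\infty}v(t,P)=1$ is $\Pi_3$, so its equivalence with every $P$ would collapse the hierarchy (the paper attributes this to ``Tarski's theorem,'' though what is really invoked is the strictness of the arithmetical hierarchy / undefinability of $\Sigma_4$ truth at level $\Pi_3$). You instead round $v$ (resp.\ $s$) to a two-valued computable approximation and apply the Shoenfield limit lemma, placing the truth set of a $\Sigma^0_4$-complete family of instances inside $\Delta^0_2$. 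This buys two things: it is more rigorous than the paper's appeal to ``Tarski's theorem,'' and it yields the sharper bound you note in passing -- the impossibility already kicks in once $\Props$ contains all $\Sigma_2$ sentences, since $\Delta^0_2\subsetneq\Sigma^0_2$, whereas the paper's $\Pi_3$ count only rules out level $4$ as stated. Your explicit handling of the reading ``$\Props$ includes at least $\Sigma_4$'' as ``contains every $\Sigma_4$ sentence'' is also a point the paper leaves implicit but genuinely needs. No gaps.
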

\begin{proof}
    The statements $\lim_{t\to\infty}v(t,P)=1$ and $\lim_{t\to\infty}s(t,P,p)=\log(p)$ are both $\Pi_3$ for any $P, p$; thus for them to be equivalent to $P$ for all $P$ would violate Tarski's theorem.
\end{proof}

Instead in this paper, we propose a game-theoretic solution relying on Hintikka's \emph{game semantics} \cite{hintikkaGameTheoreticSemantics1997}. Naturally by Lemma~\ref{lem:nogo} this does not have the standard property expected from a traditional prediction market (that agents are incentivized to bet the price up to their subjective probability for the sentence), however it does have a closely related property that may be interpreted in terms of \emph{constructivism} in the philosophy of mathematics, as we will see.

\begin{notation*}
    Booleans are denoted as $\Bools$ representing ``false'' and ``true'' respectively. Multivariable functions may be denoted as $f:S_1\to S_2\dots S_n\to T$ i.e. with $\to$ right-associative; we may sometimes type a function $f:S\to T$ as $f:(s:S)\mapsto T$ or even $f:s\mapsto T$ like a sort of $\lambda$-notation; wherein unless otherwise specified, time denoted by $t$ belongs to $\Nats$. $\proj[TU]:S\times T\times U\to T\times U$ denotes projection; $\finset{S}$ denotes the set of finite sets of elements in $S$; for a subset $T\subseteq S$, $\compl{T}$ denotes its complement; if a set $S$ contains an element $\zero$, then $f:T\too S$ denotes a function with finite support $\supp{f}:=\preimg{f}{\compl{\{\zero\}}}:= \{x\mid f(x)\ne\zero\}$; the addition of a canonical $\zero$ element to a set $S$ is denoted by $\maybe{S}:=S\cup\{\zero\}$. We use the notation $f:T\toc S$ to denote an arbitrary partial computable function, $f:T\topt S$ to denote a function that is polynomial-time in input $t$, $f:T\tod S$ to denote a non-increasing function, and $f:T\toe S$ to denote a piecewise-constant function over a finite number of pieces, i.e. an $S$-labeled partition of $T$ (so e.g. $f:T\tode S$ denotes a function that is both non-increasing and piecewise-constant), and $\intlen{l}$ to denote the length of a clopen interval or finite union of clopen intervals $l\subseteq\Rats$. Denote by $\strings$ the set of finite strings with the infix $\concat{l_1}{l_2}$ indicating string concatenation, $\Props$ the set of FOL sentences in prenex normal form, $\Props^+:=\Props\cap\compl{\Delta_0}$ and $\Sigma=\bigcup\Sigma_n$, $\Pi=\bigcup\Pi_n$. For $n\in\Nats$ and $P\in\Props^+$, the replacement of the leading variable in $P$ by $n$ is denoted $\play{P}{n}$ (i.e. if $P:=\exists x,p(x)$ or $P:=\forall x,p(x)$ then $\play{P}{n}:=p(n)$). Let $N\in\moves$: if $N=\zero$, $\play{P}{N}:= P$; else if $P\in\Sigma$, then $\play{P}{N}:=\bigvee_{n\in N}\play{P}{n}$; else if $P\in\Pi$, then $\play{P}{N}:=\bigwedge_{n\in N}\play{P}{n}$, all reduced to prenex normal form. We may also use this notation for a map $\instaplayer:\Props\to\finset\Nats$, i.e. denote $\play{P}{\instaplayer}:=\play{P}{\instaplayer(P)}$, and represent successive applications as $\play{P}{M,N}:=\play{\play{P}{M}}{N}$, etc. likewise $\play{P}{\instaplayer,\instaplayer[b]}:=\play{\play{P}{\instaplayer}}{\instaplayer[b]}$ etc.
\end{notation*}

\section{Framework}
\label{sec:framework}

We present a slightly modified version of Hintikka's Verification-Falsification (VF) game \cite{hintikkaGameTheoreticSemantics1997}, the crucial difference being that players' moves are in $\finset{\Nats}$ rather than $\Nats$ itself.

\begin{definition}[Verification-Falsification game]
    To each FOL sentence $P$ we associate a game between two players, the ``Verifier'' and the ``Falsifier''. If $P$ is $\Delta_0$, then the Verifier or Falsifier win \$1 if $P$ is true or false respectively. Else if $P$ is $\Sigma$, the Verifier goes first and if $P$ is $\Pi$, the Falsifier goes first. The first player chooses $S\in\finset{\Nats}$ as their first move, and the game proceeds as the VF game for $\play{P}{S}$.
    \label{def:vfgame}
\end{definition}

The idea is then as follows: \emph{we measure how much traders are willing to pay to play the Verification-Falsification game for some sentence $P$}. Another way of expressing this is: the asset for some sentence $\exists x,P(x)$ is the \emph{option} to exchange it for any $\bigvee_{x\in S}P(x)$ for $S$ finite; while $\forall x,P(x)$ is the \emph{obligation} to exchange it for any $\bigwedge_{x\in S}P(x)$ chosen by the opponent.

Beyond this generic description, we can give a specific construction of a \emph{program market}: a market whose participants are programs belonging to some specific class, which trade on (and in our case, play the VF game for) sentences. Such models have been described in e.g. \cite{garrabrantLogicalInduction2016, oesterheldTheoryBoundedInductive2023}; in line with them, we let the participants be polynomial-time agents. 

Def~\ref{def:vfmarket} makes our construction precise: note that this is a very theoretical framework, intended to prove that certain optimality results hold ``at least in principle''. Practically, our framework can be implemented simply by a continuous double-auction mechanism.

\begin{definition}[Prediction Market for VF games]
    Define the following:
    \begin{itemize}
        \item A \emph{price setter} $\agent[p]$ is composed of:
        \begin{itemize}
            \item a \emph{price sequence} $\price[p]:t\mapstopt\Props\too\probs$
            \item a \emph{player} $\player[p]:t\mapstopt\Props\too\strings\too\moves$
            \item a \emph{labeler} $\labeler[p]:t\mapstopt\Props\too\Rats\toe\maybe{\strings}$ such that $\intlen{\supp{\labeler[p](t,P)}}=\inventory[m](t,P)$ (defined by mutual recursion below)
        \end{itemize}
        \item An \emph{agent} $\agent$ is composed of:
        \begin{itemize}
            \item an \emph{endowment} $\endowment\in\Rats$ and a \emph{birthday} $\birthday\in\Nats$
            \item a \emph{trader} $\trader:t\mapstopt\Props\too\probs\tode\Rats$
            \item a \emph{player} $\player:t\mapstopt\Props\too\strings\too\moves$
            \item a \emph{labeler} $\labeler:t\mapstopt\Props\too\Rats\toe\maybe{\strings}$ such that $\intlen{\supp{\labeler(t,P)}}=\inventory(t,P)$ (defined by mutual recursion below)
            \item an \emph{inventory} $\inventory:t\mapstopt \Props\too\Rats$ defined by (initial conditions) $\inventory(s,\top)=0$ for $s<\birthday$, $\inventory(\birthday,\top)=\endowment$ and for all other propositions $\inventory(0,P)=0$ and (recursion rule) $\inventory(t+1)=\inventory(t)+\sum_{P,i}\change[P,i](t)$ where:
            \begin{itemize}
                \item orders placed: $\change[P,1](t,P)=\lambda\trader(t,P,\price(t,P))$ where $\lambda$ is 1 if the following conditions are met (else 0): for all $P$, $\max_p-\trader(t,P,p)\le \inventory(t,P)$ (you're not selling what you don't have) and $\sum_P\max_p p\trader(t,P,p)\le\inventory(t,\top)$ (you can afford all your purchases)
                \item cost of orders placed: $\change[P,2](t,\top)=-\price(t,P)\change[P,1](t,P)$
                \item the moves played (if $P\in\Sigma$): $\change[P,3](t,P)=-\sum_{l\in\supp{\player(t,P)}}{\intlen{\preimg{\labeler(t,P)}{l}}}$ and $\change[P,4](t,\play{P}{\player(t,P,l)})=\intlen{\preimg{\labeler(t,P)}{l}}$
                \item the opponent's moves (if $P\in\Pi$): $\change[P,5](t,P)=-\sum_{l\in\supp\player(t,P)}{\intlen{\preimg{\labeler[p](t,P)}{l}}}$ and $\change[P,6](t,\play{P}{\player[p](t,P,l)})=\intlen{\preimg{\labeler[p](t,P)}{l}}$
                \item the payout from empirical truth (if $P\in\Delta_0$): if $P\in\supp\xi(t)$, then $\change[P,7](t,P)=-\inventory(t,P)$ and $\change[P,8](t,\xi(t,P))=\inventory(t,P)$
            \end{itemize}
        \end{itemize}
        \item the \emph{empirical reality} is a process $\xi:t\mapstopt\Delta_0\too\maybe{\Bools}$ such that $s\le t,P\in\supp\xi(s)\implies\xi(t,P)=\xi(s,P)$ and $\bigcup_t\supp\xi(t)=\Delta_0$ and $\xi(t,\lnot P)=\lnot\xi(t,P)$
        \item The type of agents is denoted as $\agent[s]:=\endowment[s]\times\birthday[s]\times\trader[s]\times\player[s]\times\labeler[s]$ where the respective types are already as specified (including sub-typing by the requisite condition); define specifically $\tpler:=\trader[s]\times\player[s]\times\labeler[s]$.
        \item The \emph{father of agents} is an enumerator and allocator for agents, i.e. a map $\father:t\mapsto\agent[s]$ such that $\proj[\tpler]\circ\father$ is bijective; the total endowment is finite $\sum_t{\endowment[\father(t)]}<\infty$; and the birthdays are correct $\birthday[\father(t)]=t$. It is associated with an \emph{aggregate agent} as follows:
        \begin{itemize}
            \item $\endowment[m]=\sum_t{\endowment[\father(t)]}$ and $\birthday[m]=0$
            \item $\trader[m](t)=\sum_{\agent=\father(1),\dots\father(t)}\lambda_{\agent}\trader(t)$ where $\lambda_{\agent}$ is 1 if the following conditions are met (else 0): for all $P$, $\max_p-\trader(t,P,p)\le\inventory(t,P)$ and $\sum_P{\max_{p}{p\trader(t,P,p)}}\le\inventory(t,\top)$ (this sum is only a finite sum, because only finitely many agents have any cash at $t$)
            \item $\player[m](t,P,\concat{\code{\agent}}{l})=\player(t,P,l)$ where $\code{\agent}$ indicates some encoding for agents
            \item $\labeler[m](t,P,q)=
            \begin{cases}
                \concat{\father(t_q)}{\labeler[\father(t_q)](t,P)} & \text{if }t_q\le t \\
                \zero & \text{else}
            \end{cases}$ \\
            where $t_q$ is the smallest value such that $\sum_{s\le t_q}{\inventory[\father(s)](t,P)}\ge q$, if it exists, else $\infty$.
        \end{itemize}
        \item A special price setter $\agent[pp]$, called \emph{equilibrium}, is defined as follows -- for each $t$ and $P$:
        \begin{itemize}
            \item $\price[pp](t,P)$ is any solution $p$ to $\trader[m](t,P,p)-\trader[m](t,\lnot P,1-p)=0$.
            \item $\player[pp](t,P)=\player[m](t,\lnot P)$
            \item $\labeler[pp](t,P)=\labeler[m](t,\lnot P)$
        \end{itemize}
    \end{itemize}
    \label{def:vfmarket}    
\end{definition}

A more readable description: each agent manages its \emph{inventory} $\inventory(t):\Props\too\Rats$; in particular $\inventory(t,\top)$ denotes an agent's cash reserves. At each point in time, $\trader(t,P)$ denotes its demand (or supply, if negative) schedule for $P$, which is a sum of limit orders. The description of the player has some subtle considerations: (1) the moves are in $\moves$ rather than $\finset\Nats$, because the move $\zero$ is interpreted as ``pass'', i.e. should the agent wishes to not instantly play his move when he acquires $P$ but compute his move over several time-steps (2) an agent that holds multiple stocks of some sentence might not want to play the same move on all of them.

We capture this behaviour by having the agent \emph{labelling} different portions of his inventory of stock $P$ with different labels, i.e. a labelled partition $\Rats\toe\maybe{\strings}$ of $\Rats$ so that only a fragment of $\Rats$ whose length equals the agent's stock in $P$ is mapped to a label, everything else is mapped to $\zero$. When $\player$ then plays a move for sentence $P$ and label $l$, the amount of $P$ stocks labelled with $l$ will be removed from the inventory and replaced with $\play{P}{\player(t,P,l)}$. 

In particular, the aggregate agent $\agent[m]$ uses this system to keep the inventories for different agents separate by prefixing their labels with some code for each agent as $\concat{\code{\agent}}{l}$ -- this is necessary for the crux of the program market concept, which is that each agent can only trade with its own money, so more successful agents gain influence (in the form of wealth) while those that go bankrupt are effectively removed from the market. This is done through the indicator denoted by $\lambda_{\agent}$, which checks if $\agent$ can afford the trades it makes.

Finally, some notes regarding equilibrium calculation: in our framework, agents provide independent demand schedules for each sentence (i.e. a map $\Props\too(\probs\to\Rats)$, rather than a map $(\Props\too\probs)\to\Rats$): in other words, should the agents wish for cross-elasticity in their demand-schedules, the onus is on them to estimate the prices of other sentences. This reduces equilibrium calculation to finding the zero of a non-increasing piecewise-constant function, which is elementary. The equilibrium price setter's player is more subtle: at equilibrium price, $\agent[m]$ buys equal amounts of $P$ and $\lnot P$, however it may use different players for different portions of each. So we have $\player[pp]$ use $\agent[m]$'s players for $\lnot P$ against $\agent[m]$'s players for $P$ and vice versa so it wins exactly as many games as it loses, as illustrated in Fig~\ref{fig:matchup}.

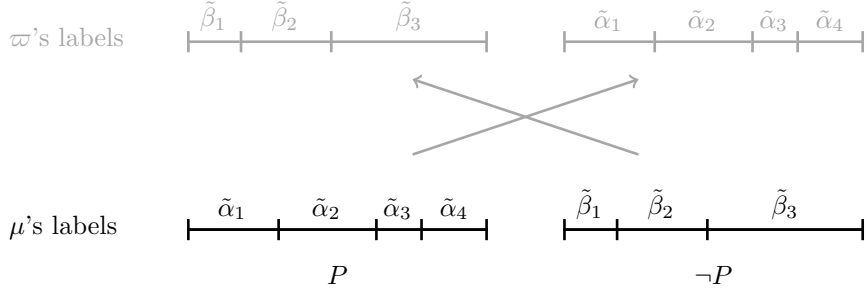
\begin{figure}
    \centering
    \usetikzlibrary{arrows}
\definecolor{aqaqaq}{rgb}{0.65,0.65,0.65}
\begin{tikzpicture}[line width =1pt]%

\draw (-2.5,0.32) node[anchor=north west] {$\mu\text{'s labels}$};

\draw (0,0) -- (4,0) node[midway, below = 10pt]{$P$};
\draw [|-] (0,0) -- (1.2,0) node[midway, above]{$\small\tilde{\alpha}_1$};
\draw [|-] (1.2,0) -- (2.5,0) node[midway, above]{$\small\tilde{\alpha}_2$};
\draw [|-] (2.5,0) -- (3.1,0) node[midway, above]{$\small\tilde{\alpha}_3$};
\draw [|-|] (3.1,0) -- (4,0) node[midway, above]{$\small\tilde{\alpha}_4$};

\draw (5,0) -- (9,0) node[midway, below = 10pt]{$\lnot P$};
\draw [|-] (5,0) -- (5.7,0) node[midway, above]{$\small\tilde{\beta}_1$};
\draw [|-] (5.7,0) -- (6.9,0) node[midway, above]{$\small\tilde{\beta}_2$};
\draw [|-|] (6.9,0) -- (9,0) node[midway, above]{$\small\tilde{\beta}_3$};

\begin{scope}[color = aqaqaq]
\draw (-2.5,2.82) node[anchor=north west] {$\varpi\text{'s labels}$};

\draw [|-] (0,2.5) -- (0.7,2.5) node[midway, above]{$\small\tilde{\beta}_1$};
\draw [|-] (0.7,2.5) -- (1.9,2.5) node[midway, above]{$\small\tilde{\beta}_2$};
\draw [|-|] (1.9,2.5) -- (4,2.5) node[midway, above]{$\small\tilde{\beta}_3$};

\draw [|-] (5,2.5) -- (6.2,2.5) node[midway, above]{$\small\tilde{\alpha}_1$};
\draw [|-] (6.2,2.5) -- (7.5,2.5) node[midway, above]{$\small\tilde{\alpha}_2$};
\draw [|-] (7.5,2.5) -- (8.1,2.5) node[midway, above]{$\small\tilde{\alpha}_3$};
\draw [|-|] (8.1,2.5) -- (9,2.5) node[midway, above]{$\small\tilde{\alpha}_4$};

\draw [->] (6,1) -- (3,2);
\draw [->] (3,1) -- (6,2);
\end{scope}

\end{tikzpicture}
    \caption{Illustration of $\varpi$'s player countering $\mu$}
    \label{fig:matchup}
\end{figure}

\section{Results}
\label{sec:results}

\begin{definition}[Exploitation]
    An agent $\agent$ is said to \emph{exploit} a price-setter $\agent[p]$ if $\{\inventory(t,\top):t\in\Nats\}$ is bounded from below but not bounded from above.
    \label{def:exploit}
\end{definition}

\begin{lemma}[Inexploitabity]
    There is no agent $\agent\in\agent[s]$ that exploits $\agent[pp]$.
    \label{lem:inexploit}
\end{lemma}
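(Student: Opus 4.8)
The plan is to follow the two-step strategy behind Theorem~\ref{thm:garrabrant}: reduce inexploitability of $\agent[pp]$ by an arbitrary $\agent\in\agent[s]$ to inexploitability by the aggregate agent $\agent[m]$, and then establish the latter directly, using the fact that $\agent[pp]$ is rigged so that $\agent[m]$ wins exactly as many verification--falsification games as it loses.

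\emph{Reduction to $\agent[m]$.} Suppose toward a contradiction that some $\agent\in\agent[s]$ exploits $\agent[pp]$, and let $n$ be the unique index with $\proj[\tpler](\father(n))=\proj[\tpler](\agent)$, which exists by bijectivity of $\proj[\tpler]\circ\father$. An agent never trips its own budget indicator: the indicator caps purchases at current cash and short sales at current holdings, playing a move merely rewrites stock in $P$ as stock in $\play{P}{\cdot}$ at no cash cost, and $\xi$-resolution converts a nonnegative $\Delta_0$ position into cash (or into a valueless token), so an agent's stock and cash positions stay nonnegative. Hence $\lambda_{\father(n)}\equiv 1$ inside $\agent[m]$, and as in the sketch of Theorem~\ref{thm:garrabrant} the trades of $\agent$ flow into $\agent[m]$ undiminished while any bankrupt sub-agent is simply zeroed out; since the cash ledger of $\agent[m]$ is the sum of the nonnegative cash ledgers of the sub-agents plus the not-yet-allocated endowment, $\inventory[m](t,\top)$ dominates $\inventory(t,\top)$ for $\agent$, which was assumed unbounded above. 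It therefore suffices to show $\{\inventory[m](t,\top)\}$ is bounded above; it is bounded below by $0$ for the reason just given.

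\emph{The conserved quantity.} The heart of the argument is the invariant that, for every $P\in\Props$ and every $t$, $\agent[m]$ holds equal total lengths of stock in $P$ and in $\lnot P$ (equivalently: every stock it ever buys is part of a matched $P/\lnot P$ pair). I would prove this by simultaneous induction on $t$ and on the rank of $P$ in the arithmetical hierarchy. At the equilibrium price the defining equation $\trader[m](t,P,p)-\trader[m](t,\lnot P,1-p)=0$, evaluated at $p=\price[pp](t,P)$ (so $\price[pp](t,\lnot P)=1-\price[pp](t,P)$), makes the order-driven change in $P$-stock equal to that in $\lnot P$-stock and makes a freshly bought matched pair cost exactly~\$1. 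Because $\player[pp](t,P)=\player[m](t,\lnot P)$ and $\labeler[pp](t,P)=\labeler[m](t,\lnot P)$, the move the opponent plays on each chunk of $\agent[m]$'s $\Pi$-stock $\lnot P$ is exactly the move $\agent[m]$ plays on the matched chunk of its $\Sigma$-stock $P$ (with passes, encoded by $\zero\in\moves$, matched to passes); and since $\play{\lnot P}{N}=\lnot\play{P}{N}$ after prenex normalization, the resulting pair $(\play{P}{N},\play{\lnot P}{N})$ is again matched and, whenever $N\neq\zero$, of strictly lower rank -- the finite disjunction/conjunction collapsing exactly one leading quantifier block. The induction thus descends to $\Delta_0$, where $\xi$ (which obeys $\xi(t,\lnot Q)=\lnot\xi(t,Q)$) makes exactly one of $Q,\lnot Q$ true and returns exactly~\$1 on the pair. (This invariant is also what makes $\agent[pp]$ well-defined at all, since $\labeler[pp](t,P)=\labeler[m](t,\lnot P)$ is a legal labeler for the $P$-stock only when $\inventory[m](t,\lnot P)=\inventory[m](t,P)$.) Now value $\agent[m]$'s portfolio pessimistically as cash, plus~\$1 per as-yet-unresolved matched pair, plus~\$0 for anything else: each elementary update (a buy, a sell at the equilibrium price, an own move, an opponent move, a $\xi$-resolution) leaves this quantity non-increasing, so it never exceeds its initial value $\endowment[m]<\infty$; in particular $\inventory[m](t,\top)\le\endowment[m]$, contradicting the reduction step.

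\emph{Main obstacle.} I expect the delicate part to be the simultaneous induction: one must maintain an explicit bijection between the chunks of $\agent[m]$'s $P$-stock and those of its $\lnot P$-stock through \emph{all} of the updates of Def~\ref{def:vfmarket} -- including the sub-agent prefixes $\concat{\code{\agent}}{l}$ that $\agent[m]$ inserts (cf.\ Fig~\ref{fig:matchup}) and the inflow of new stock from higher-rank sentences -- and check that $\labeler[pp]=\labeler[m](\cdot,\lnot\cdot)$ together with $\player[pp]=\player[m](\cdot,\lnot\cdot)$ keeps the bijection intact. The other point requiring care is that the rank of $\play{P}{N}$ is strictly below that of $P$ for every $N\neq\zero$, which is exactly where the restriction of moves to \emph{finite} sets $N\in\moves$ (rather than arbitrary quantifier instantiations) is essential: a finite disjunction of $\Pi_k$ sentences is again $\Pi_k$ (and dually), so the recursion stays inside the hierarchy and terminates at $\Delta_0$.
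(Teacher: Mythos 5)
Your proposal follows exactly the paper's own two-step argument: reduce to the aggregate agent $\agent[m]$ (since sub-agents' cash ledgers are nonnegative and the exploiter's is unbounded), then observe that by construction $\agent[m]$ always holds matched $P$/$\lnot P$ pairs and wins exactly as many VF games as it loses. Your write-up is in fact considerably more detailed than the paper's one-paragraph proof --- in particular the explicit conserved-quantity/pessimistic-valuation argument and the rank-descent induction are only implicit in the paper --- but the approach is the same and the proposal is correct.
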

\begin{proof}
    If any $\agent\in\agent[s]$ exploits $\agent[pp]$, then so does $\agent[m]$ (because the inventory allocated to $\agent$ will increase without bound, while all other agents' inventories are bounded below by 0 since they can only spend from their alloted inventory). However, by construction, $\agent[m]$ does not exploit $\agent[pp]$ (as it always holds an equal number of $P$ and $\lnot P$ stocks, and wins exactly as many games as it loses). Thus, no $\agent\in\agent[s]$ exploits $\agent[pp]$.
\end{proof}

\begin{theorem}[Convergence]
    $\lim_{t\to\infty}\price[pp](t,P)$ exists for all $P$; denote this as $\longrun(P)$.
    \label{thm:convergence}
\end{theorem}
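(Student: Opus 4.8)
The plan is to adapt the ``buy low, sell high'' exploitation argument familiar from the convergence proof of logical induction \cite{garrabrantLogicalInduction2016} to our setting, concluding by a contradiction with Lemma~\ref{lem:inexploit}. I begin with two reductions. Since every $\price[pp](t,P)$ lies in $\probs\subseteq[0,1]$, the sequence $(\price[pp](t,P))_t$ is bounded, so if it does not converge there are rationals $0<a<b<1$ with $\price[pp](t,P)<a$ infinitely often and $\price[pp](t,P)>b$ infinitely often; it therefore suffices, for each $P$, to rule such $a,b$ out. And since $\agent[pp]$ is defined symmetrically in a sentence and its negation, $\price[pp](t,\lnot P)=1-\price[pp](t,P)$, so it is enough to treat $P\in\Delta_0$ and $P\in\Sigma\setminus\Delta_0$, the case $P\in\Pi\setminus\Delta_0$ following since then $\lnot P\in\Sigma\setminus\Delta_0$. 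The $\Delta_0$ case is quick: as $\bigcup_t\supp\xi(t)=\Delta_0$, the reality process eventually decides $P$, after which every unit of $P$ held by anyone is turned into cash (if $P$ was decided true) or rendered permanently valueless (if decided false); unless $\price[pp](\cdot,P)$ has by then settled at the decided truth value, the trader that repeatedly buys, whenever it is priced below $1$, whichever of $P$ and $\lnot P$ is true exploits $\agent[pp]$, contradicting Lemma~\ref{lem:inexploit}.

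So fix $P\in\Sigma\setminus\Delta_0$ and suppose such rationals $a,b$ exist. The point that makes the construction safe is that a holder of a $\Sigma$-sentence plays as the Verifier: by Def~\ref{def:vfmarket}, units of $P$ in an agent's inventory are transmuted only through that agent's own player (the clause for the opponent's moves applies only when the sentence is $\Pi$, and the empirical-reality clause only when it is $\Delta_0$), so an agent that always passes --- plays the move $\zero$ --- retains its $P$-units indefinitely. (This is exactly why we detour through $\lnot P$ when $P\in\Pi\setminus\Delta_0$: a $\Pi$-holder's units can be forced into subgames by $\agent[pp]$, since $\player[pp](t,P)=\player[m](t,\lnot P)$.)

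I now construct the exploiting agent $\agent$. Its player always returns $\zero$; its labeler assigns a fixed dummy string to a sub-interval of $\Rats$ of length $\inventory(t,P)$ and $\zero$ elsewhere, and is trivial on every sentence $\ne P$. Writing $\endowment>0$ for the endowment that the father of agents attaches to this $\tpler$-value and putting $c:=\min(1,\endowment/2)$, its trader submits at time $t$, on $P$, the demand schedule $p\mapsto c\indicate{p<a}$ when $\inventory(t,P)=0$ and $p\mapsto -c\indicate{p>b}$ when $\inventory(t,P)=c$ (and trades nothing on sentences $\ne P$). Each schedule is non-increasing and piecewise constant; $\inventory(t,P)\in\{0,c\}$ is determined by the (polynomial-time-computable) price history up to $t$, so $\trader$ is polynomial-time in $t$; and the budget constraints of Def~\ref{def:vfmarket} hold throughout (a purchase costs less than $c\le\endowment/2$, and an easy induction shows the agent's cash never falls below $\endowment-c\ge\endowment/2>0$). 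Hence $\agent\in\agent[s]$, and since $\proj[\tpler]\circ\father$ is a bijection, $\agent$ is one of the market participants $\father(t)$.

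Finally I track $\inventory(t,\top)$ from $\agent$'s birthday onwards. Whenever $\agent$ holds no $P$ and the price dips below $a$ it buys $c$ units at the realized price $p<a$; it holds these units untouched until the price exceeds $b$, then sells them at the realized price $q>b$; since both events recur infinitely often the cycle completes infinitely often, each completion changing $\inventory(\cdot,\top)$ by $c(q-p)>c(b-a)>0$. So $\{\inventory(t,\top):t\in\Nats\}$ is bounded below (by $\endowment-c>0$) but not above, i.e.\ $\agent$ exploits $\agent[pp]$ --- contradicting Lemma~\ref{lem:inexploit}. Hence $(\price[pp](t,P))_t$ converges; set $\longrun(P):=\lim_t\price[pp](t,P)$. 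I expect the main obstacle to be not the arithmetic of the cycles but the structural care above --- deciding precisely which sentences an agent can hold without interference (hence the detour through $\lnot P$) and checking that $\trader,\player,\labeler$ meet every formal requirement of $\agent[s]$, including $\intlen{\supp{\labeler(t,P)}}=\inventory(t,P)$ --- whereas the apparent danger that $\agent$'s own orders distort the equilibrium price is harmless precisely because $\agent$ is itself an enumerated participant, so the hypothesized prices $p<a$ and $q>b$ already incorporate its presence.
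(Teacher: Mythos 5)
Your proof is correct and takes essentially the same route as the paper's: a buy-low/sell-high trader paired with a player that always passes, whose existence would contradict Lemma~\ref{lem:inexploit}. The one place you go beyond the paper's two-sentence sketch is the case split on quantifier class --- in particular the observation that a passive holder of a $\Pi$ sentence can have its units transmuted into subgame positions by the opponent's moves, which motivates your detour through $\lnot P$; this is a genuine subtlety the paper's own proof glosses over. The only caveat is that the detour leans on the identity $\price[pp](t,\lnot P)=1-\price[pp](t,P)$, which the definition of $\price[pp]$ as ``any solution'' of $\trader[m](t,P,p)-\trader[m](t,\lnot P,1-p)=0$ guarantees only up to which point of a possibly non-degenerate interval of solutions is chosen for $P$ versus $\lnot P$; a consistent selection rule needs to be assumed (or the symmetry imposed) for that step to be airtight.
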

\begin{proof}
    Suppose it didn't; then there exists $x\in(0,1)$ and $\varepsilon>0$ such that $\price[pp](t,P)>x+\varepsilon$ infinitely often and $\price[pp](t,P)<x-\varepsilon$ infinitely often. Then consider an agent given by a trader that sells when $\price[pp](t,P)>x+\varepsilon$ and buys when $\price[pp](t,P)<x-\varepsilon$, and a trivial player (doesn't play at all; returns $\zero$ each time). This agent exploits the market.
\end{proof}

Ideally, we would like to say that our market learns to correctly price very FOL sentence: that $P$ is equivalent to $\longrun(t,P)=1$, and $\lnot P$ is equivalent to $\longrun(t,P)=0$: because if a true sentence did not approach \$1, an agent could buy arbitrarily large quantities of it and win the VF game on them. However, we know from Lemma~\ref{lem:nogo} that this is impossible: indeed, the flaw in our intuition is that the mere existence of a winning strategy (i.e. the ``truth'' of $P$) does not imply it is actually computable. Instead, we adopt the following ``constructivist'' notion of truth, closely related to the notion of ``CGTS-truth'' (``Computational Game Theoretic Semantics Truth'') proposed in \cite{boyerProofTruth2012b}.

\begin{definition}[Constructive truth]
    Denote $\instaplayer[s]:=\Props\toc\finset\Nats$. For any $P\in\Props$ and $\instaplayer,\instaplayer[b]\in\instaplayer[s]$, observe that the sequence $\play{P}{\instaplayer,\instaplayer[b],\instaplayer,\instaplayer[b],\dots}$ eventually converges to a $\Delta_0$ sentence: denote this sentence by $\playfull(P,\instaplayer,\instaplayer[b])$. Then an FOL-sentence $P$ is said to be $\instaplayer[s]$-true if $\exists\instaplayer\in\instaplayer[s],\forall\instaplayer[b]\in\instaplayer[s],\playfull(P,\instaplayer,\instaplayer[b])$, and $\instaplayer[s]$-false if $\exists\instaplayer[b]\in\instaplayer[s],\forall\instaplayer[a]\in\instaplayer[s],\lnot\playfull(P,\instaplayer,\instaplayer[b])$.
    \label{def:constructive}
\end{definition}

\begin{lemma}[Correspondence between $\instaplayers$ and $\players$]
    For any $\instaplayer:\Props\toc\finset\Nats$ there is a corresponding $\player:t\mapstopt\Props\too\strings\too\moves$ that executes it,, such that the following conditions are met: (1) $\forall s\ne t,\supp\instaplayer(t)\cap\supp\instaplayer(s)=\varnothing$ and (2) if $P\in\supp\instaplayer$ ($\instaplayer$ halts on input $P$) then $\exists t,\player(t,P,\foo)=\instaplayer(P)$ else $\forall t,\player(t,P,\foo)=\zero$.
    \label{lem:correspondence}
\end{lemma}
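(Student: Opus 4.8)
The plan is to realize $\instaplayer$ by dovetailing: have $\player$ run the partial computable map $\instaplayer$ on all inputs in parallel under a step budget that grows with $t$, and ``play'' the value $\instaplayer(P)$ at the single time step at which this simulation is first observed to halt on input $P$. First I would fix a computable enumeration $P_1,P_2,\dots$ of $\Props$ (say by increasing code) and a machine $M$ computing $\instaplayer$, and set $D_t:=\{P_i : i\le t \text{ and } M \text{ on input } P_i \text{ halts within } t \text{ steps}\}$, with $D_0:=\varnothing$. Three facts about $D_t$ do all the work: it is computable from $t$ in time polynomial in $t$ (we run at most $t$ computations for at most $t$ steps each, on inputs of length $O(\log t)$); the map $t\mapsto D_t$ is non-decreasing; and $\bigcup_t D_t=\supp\instaplayer$.

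Then I would define $\player(t,P,l)$ to be $\instaplayer(P)$ when $P\in D_t\setminus D_{t-1}$ and $\zero$ otherwise, independently of the string argument $l$. This is polynomial-time in $t$: computing $D_t$ and $D_{t-1}$ is polynomial, and whenever $P\in D_t\setminus D_{t-1}$ the computation of $M$ on $P$ has already halted within $t$ steps, so $\instaplayer(P)$ has size at most $t$ and is recomputable cheaply, while the support $D_t\setminus D_{t-1}$ of $\player(t)$ is finite and listable within the same bound. Condition~(1) is then immediate, since $\{D_t\setminus D_{t-1}\}_{t\ge 1}$ partitions $\bigcup_t D_t$, so the sets of sentences on which $\player$ acts nontrivially at distinct times are disjoint. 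Condition~(2) is equally direct: if $\instaplayer$ halts on $P=P_i$ after exactly $N$ steps, then $P$ first enters $D_t$ at $t^\star=\max(i,N)$, so $\player(t^\star,P,\foo)=\instaplayer(P)$ and $\player(t,P,\foo)=\zero$ at all other $t$; whereas if $\instaplayer$ diverges on $P$ then $P\notin D_t$ for every $t$, so $\player(t,P,\foo)=\zero$ always.

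The one genuinely delicate point is the tension between $\instaplayer$ being an arbitrary partial computable function --- possibly divergent, with no a priori bound on running time or output length --- and $\player$ being required to run in time polynomial in $t$. The ``budget $=t$ steps on the first $t$ inputs'' schedule threads this needle: it is cheap enough to stay polynomial in $t$, yet every halting computation of $M$ terminates after finitely many steps on an input of finite index and is therefore caught at some finite $t$, and a computation that halts within $t$ steps cannot have written an output longer than $t$, so re-running it to emit the move stays within budget. (Here I take $t$ to be presented in unary, equivalently $\player$ is allotted $\mathrm{poly}(t)$ steps, which is the convention for the poly-time agents of this framework; under that convention the dovetailing bound is exactly what is needed.) Everything else --- type-correctness of $\instaplayer(P)$ as an element of $\moves$, and the reading of~(1) as ``$\player$ touches pairwise-disjoint sets of sentences at distinct times'' --- is routine and falls out of the partition structure above.
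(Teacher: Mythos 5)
Your proof is correct and takes essentially the same approach as the paper's: dovetail the simulation of $\instaplayer$ with a budget of $t$ steps on the first $t$ sentences in a fixed enumeration, and output $\instaplayer(P)$ at the step where halting is first detected. If anything, your use of $D_t\setminus D_{t-1}$ makes the ``play exactly once'' requirement of condition~(1) more explicit than the paper's $\halts(\instaplayer,P,t)$ predicate, and your remarks on why the output remains polynomial-time computable in $t$ are a welcome elaboration of a point the paper leaves implicit.
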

\begin{proof}
    Fix an enumeration of $\Props$ i.e. a bijection $\zeta:\Nats\to\Props$, and fix a model of computation for $\instaplayer$, e.g. a Turing Machine. Then define $\player$ as follows:
    \begin{equation*}
        \player(t,P,l)=
        \begin{cases}
            \instaplayer(P) & \text{if } P\in\{\zeta(0),\dots,\zeta(t)\},\\&\ \ \ %
            \halts(\instaplayer,P,t),\\&\ \ \ \:%
            l=\foo \\
            \zero & \text{else}
        \end{cases}
    \end{equation*}
\end{proof}

\begin{theorem}[Learning constructive truth]
    If $P$ is an $\instaplayer[s]$-true sentence, then $\longrun(P)=1$. 
    \label{thm:main}
\end{theorem}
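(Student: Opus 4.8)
Suppose for contradiction that $\longrun(P)=c<1$ (this limit exists by Theorem~\ref{thm:convergence}); the goal is to construct an $\agent\in\agent[s]$ that exploits $\agent[pp]$, contradicting Lemma~\ref{lem:inexploit}. I will argue by induction on the number of quantifier alternations of $P$. The device I will lean on is that a single move at a suitable node turns an $\instaplayer[s]$-true sentence into an $\instaplayer[s]$-true sentence of \emph{strictly smaller} alternation complexity; I will also use $\longrun(\lnot P)=1-\longrun(P)$, which holds once the solutions of $\trader[m](t,P,p)-\trader[m](t,\lnot P,1-p)=0$ are taken consistently (the equation is antisymmetric under $(P,p)\mapsto(\lnot P,1-p)$).

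\emph{Base case, $P\in\Delta_0$.} Here $\instaplayer[s]$-truth is plain truth, so $\xi(t,P)$ is eventually revealed to be ``true''. Let $\agent$ have the trivial player and a trader that buys one unit of $P$ in every period with $\price[pp](t,P)<\frac{1+c}{2}$ (when the budget check permits). Since $\price[pp](t,P)\to c$ this happens in all late periods; $\xi$ eventually converts each purchased unit into \$1 of cash; so the net gain exceeds $\frac{1-c}{2}$ per unit, cash stays nonnegative by the budget checks, and hence cash is unbounded above — $\agent$ exploits $\agent[pp]$.

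\emph{Inductive step, $P=\exists x,p(x)\in\Sigma$.} Fix a witness $\instaplayer$ to the $\instaplayer[s]$-truth of $P$, realized as a player via Lemma~\ref{lem:correspondence} (arranged so $\instaplayer$'s move is played on every batch of stock the agent holds), and set $Q:=\play{P}{\instaplayer(P)}$: this is an $\instaplayer[s]$-true sentence of smaller complexity ($\instaplayer$ restricted to the game below $Q$ witnesses it), so $\longrun(Q)=1$ by the IH. The exploiting agent then (i) buys a unit of $P$ whenever $\price[pp](t,P)<\frac{1+c}{2}$ and it can afford to, (ii) plays $\instaplayer(P)$, turning its $P$-stock into $Q$-stock, and (iii) sells its $Q$-stock whenever $\price[pp](t,Q)>1-\delta$, with $\delta<\frac{1-c}{2}$ fixed. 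Because $\longrun(P)=c$ and $\longrun(Q)=1$, both conditions hold in all late periods, so every purchased unit is eventually resold for more than $1-\delta$ having cost less than $\frac{1+c}{2}$; there are unboundedly many such cycles, cash stays nonnegative, so $\agent$ exploits $\agent[pp]$. (The agent's own selling of $Q$-stock cannot spoil $\longrun(Q)=1$: the IH is about the actual market, of which this agent is a part.) The essential point is that one never plays the VF game to the end — one move and a \emph{liquidation} at the inductively-known price near $1$ suffice.

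\emph{The $\Pi$ case and the main obstacle.} If $P=\forall x,p(x)\in\Pi$ the holder of $P$-stock moves \emph{second}, so the game only advances once the equilibrium price setter makes its opening Falsifier move $S$ on that stock; afterwards the stock is $\play{P}{S}$, again $\instaplayer[s]$-true of smaller complexity, which the agent liquidates as above. Thus the one genuinely delicate point — the step I expect to be the crux — is ruling out the price setter \emph{stalling} (playing the pass move $\zero$ on the agent's $P$-stock forever): since $\player[pp](t,P)=\player[m](t,\lnot P)$ and $\labeler[pp](t,P)=\labeler[m](t,\lnot P)$, a non-trivial opponent move requires the aggregate $\agent[m]$ to be holding and actively playing $\lnot P$-stock. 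Handling this seems to require careful use of the bookkeeping in Definition~\ref{def:vfmarket} — in particular that $\father$ enumerates \emph{all} of $\agent[s]$, so that solvent counterparties are always present — or else a detour through the duality $\longrun(P)=1\iff\longrun(\lnot P)=0$ (with $\lnot P$ being $\instaplayer[s]$-false), which however only relocates the difficulty since $P$ and $\lnot P$ sit at the same level. Finally, the implication is necessarily one-way: by Lemma~\ref{lem:nogo} no computable mechanism can track $P$ itself, and $\instaplayer[s]$-truth is in general strictly stronger than truth.
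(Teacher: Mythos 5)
Your route is genuinely different from the paper's, and the paper's own proof shows that the induction you set up is unnecessary. The paper argues in one shot: take the witness $\instaplayer$ from the definition of $\instaplayer[s]$-truth, convert it wholesale into a market player via Lemma~\ref{lem:correspondence} (with the constant labeler returning $\foo$), and let the agent buy a unit of $P$ whenever $\price[pp](t,P)<1-\varepsilon$. The agent then plays the \emph{entire} VF game on each unit it holds; since $\playfull(P,\instaplayer,\instaplayer[b])$ holds against every $\instaplayer[b]$, each unit eventually reaches a true $\Delta_0$ sentence and is converted into \$1 of cash by $\xi$, for a profit of at least $\varepsilon$ per unit, giving exploitation directly. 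This sidesteps everything your decomposition requires: you never have to show that $\play{P}{\instaplayer(P)}$ is again $\instaplayer[s]$-true of strictly lower alternation depth, you never need $\longrun(\lnot P)=1-\longrun(P)$, and you never need the intermediate prices to converge to $1$ in order to liquidate. (Your liquidation idea does buy something the paper's proof does not --- the agent cashes out after a single move rather than waiting for the game and for $\xi$ to resolve --- but at the cost of an induction whose $\Pi$ levels you cannot currently close.)

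That unclosed $\Pi$ case is a genuine gap in your proposal as written: you correctly observe that $\player[pp](t,P)=\player[m](t,\lnot P)$ only produces a non-$\zero$ move on the agent's $P$-stock when some matched $\lnot P$-holder actually plays, and you then leave the stalling possibility unresolved, so your induction does not go through past the first universal quantifier. To be fair, the paper's one-shot proof quietly depends on the very same thing at every universal node of the game tree --- if the opponent never advances the game past a $\Pi$ position, the stock never reaches $\Delta_0$, never pays out, and the agent's cash is not unbounded above --- and the paper does not address this either. So you have isolated a step the paper elides rather than invented a new difficulty; but to match the paper's proof you should discard the induction, hand the agent the full strategy $\instaplayer$ via Lemma~\ref{lem:correspondence}, and, if you want a fully rigorous argument, separately verify from the bookkeeping of Definition~\ref{def:vfmarket} that the $\lnot P$-players supplied by $\father$'s enumeration eventually move on every unit of the agent's holdings.
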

\begin{proof}
    Suppose it wasn't; then there is some $\varepsilon$ such that $\price[pp](t,P)$ is below $1-\varepsilon$ infinitely many times. Then consider the agent given by (1) the trader $\trader$ that buys whenever $\price[pp](t,P)<1-\varepsilon$ (2) the player $\player$ that is the Lemma~\ref{lem:correspondence}-correspondent of the $\instaplayer$ affirmed in the hypothesis $\exists\instaplayer\in\instaplayer[s],\forall\instaplayer[b]\in\instaplayer[s],\playfull(P,\instaplayer,\instaplayer[b])$ (3) the labeler $\labeler$ that returns $\foo$ on all outputs. This agent exploits the market.
\end{proof}

\begin{corollary}[Learning constructive falsehood]
    If $P$ is an $\instaplayer[s]$-false sentence, then $\longrun(P)=0$.
    \label{cor:main}
\end{corollary}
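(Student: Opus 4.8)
The plan is to reduce the statement to Theorem~\ref{thm:main} applied to the negation $\lnot P$. Two ingredients are needed: (i) that $P$ is $\instaplayer[s]$-false exactly when $\lnot P$ is $\instaplayer[s]$-true, and (ii) that $\longrun(P) = 1 - \longrun(\lnot P)$. Granting both, if $P$ is $\instaplayer[s]$-false then $\lnot P$ is $\instaplayer[s]$-true, so $\longrun(\lnot P) = 1$ by Theorem~\ref{thm:main}, whence $\longrun(P) = 1 - 1 = 0$ as claimed.

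For (i), the idea is the \emph{negation-duality} of the Verification--Falsification game: the game for $\lnot P$ is essentially the game for $P$ with the Verifier and Falsifier interchanged (so whoever moves first in one game is the opponent of whoever moves first in the other) and with the final $\Delta_0$ verdict negated. The underlying algebraic fact is $\play{\lnot R}{N} = \lnot\play{R}{N}$ for every $R\in\Props$ and $N\in\moves$: if $R\in\Sigma$ then $\lnot R\in\Pi$, so $\play{\lnot R}{N} = \bigwedge_{n\in N}\lnot\play{R}{n} = \lnot\bigvee_{n\in N}\play{R}{n} = \lnot\play{R}{N}$ by De Morgan, and reduction to prenex normal form respects this. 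Letting $\sigma$ be the computable involution of $\instaplayer[s]$ given by $\sigma(\instaplayer)(R) := \instaplayer(\lnot R)$, an induction along the play sequence then yields $\playfull(\lnot P,\instaplayer,\instaplayer[b]) = \lnot\playfull(P,\sigma(\instaplayer),\sigma(\instaplayer[b]))$. Conjugating the quantifier prefix in Def~\ref{def:constructive} by $\sigma$ -- and using that the party whose strategy is existentially witnessed in ``$\lnot P$ is $\instaplayer[s]$-true'' occupies, after the Verifier/Falsifier interchange, exactly the role whose strategy is existentially witnessed in ``$P$ is $\instaplayer[s]$-false'' -- gives the equivalence (i).

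For (ii), I would read the symmetry off Def~\ref{def:vfmarket}: $p$ is a zero of $p\mapsto\trader[m](t,P,p)-\trader[m](t,\lnot P,1-p)$ iff $1-p$ is a zero of $p'\mapsto\trader[m](t,\lnot P,p')-\trader[m](t,P,1-p')$, so the solution sets defining $\price[pp](t,P)$ and $\price[pp](t,\lnot P)$ correspond under $p\mapsto 1-p$. Since the remaining components of $\agent[pp]$ are already mirrored ($\player[pp](t,P) = \player[m](t,\lnot P)$ and $\labeler[pp](t,P) = \labeler[m](t,\lnot P)$), we may take the price-selection rule to be symmetric as well, i.e.\ $\price[pp](t,\lnot P) = 1-\price[pp](t,P)$ for all $t, P$; then Theorem~\ref{thm:convergence} gives that both limits exist and $\longrun(P) + \longrun(\lnot P) = 1$, which is (ii).

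I expect the main obstacle to be the bookkeeping in (i) -- verifying how negation commutes with the leading-variable substitution, the prenex-normal-form normalization, and the alternation of moves, and in particular that the quantifier orders in Def~\ref{def:constructive} line up correctly under the Verifier/Falsifier interchange (this last point being the subtle one). The whole argument can equivalently be written without invoking Theorem~\ref{thm:main} as a black box, by mimicking its proof: assuming $\longrun(P)\ne 0$, fix $\varepsilon>0$ with $\price[pp](t,P) > \varepsilon$ infinitely often, hence $\price[pp](t,\lnot P) < 1-\varepsilon$ infinitely often; then the agent whose trader buys $\lnot P$ precisely at those times, whose labeler returns $\foo$ everywhere, and whose player is the Lemma~\ref{lem:correspondence}-correspondent of $\sigma(\instaplayer[b])$ -- for $\instaplayer[b]$ the strategy witnessing the $\instaplayer[s]$-falsehood of $P$ -- wins the VF game on every unit of $\lnot P$ it holds against any opponent, in particular against $\agent[pp]$ (whose play for $\lnot P$ is mirrored from $\agent[m]$'s play for $P$), and therefore exploits $\agent[pp]$, contradicting Lemma~\ref{lem:inexploit}.
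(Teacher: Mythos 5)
Your proposal is correct and takes the same route as the paper, whose entire proof is ``Apply Thm~\ref{thm:main} to $\lnot P$''; the two facts you make explicit --- that $P$ is $\instaplayer[s]$-false exactly when $\lnot P$ is $\instaplayer[s]$-true, and that $\longrun(P)=1-\longrun(\lnot P)$ via the symmetry of the equilibrium condition defining $\price[pp]$ --- are precisely what that one-line reduction leaves implicit. The negation-duality bookkeeping you flag as the subtle point is indeed the only real content here, and your treatment of it is sound.
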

\begin{proof}
    Apply Thm~\ref{thm:main} to $\lnot P$.
\end{proof}

\section{Discussion}

We have presented a prediction market on sentences expressible in First-Order Logic and proven that our framework learns a notion of ``$\instaplayer[s]$-truth'' in the form presented in Def~\ref{def:constructive}. Although usually applied to mathematical logic, First-Order Logic can also express sentences about some empirical universe, even if the process that generates empirical truth (denoted $\xi$ in our framework) is probabilistic in nature.

However, our result (Thm~\ref{thm:main}) is still somewhat weak, in that it says nothing about sentences that are neither $\instaplayer[s]$-true nor $\instaplayer[s]$-false -- from Thm~\ref{thm:convergence}, we know that the market price of every sentence converges to \emph{something}, and it would be interesting to study the nature of this limiting distribution. It would not be a probability distribution in the traditional sense -- i.e. you would not have $\Prob[\exists x,P(x)]=\sup_x\Prob[\exists i\le x,P(i)]$ and $\Prob[\forall x,P(x)]=\inf_x\Prob[\forall i\le x,P(i)]$, as this would contradict Lemma~\ref{lem:nogo}. Instead we might consider an alternate definition of a $\sigma$-algebra which is required to be closed only under unions and intersections of \emph{computable sequences}, i.e. replacing the countable union axiom with ``$\psi:\Nats\toc\mathcal{F}\implies\bigcup_i\psi(i)\in\mathcal{F}$'' and adopting a suitable generalized notion of probability measure on this algebra. Possibly relevant work to this end includes: quantifier algebras \cite{cwooQuantifierAlgebra2013}, Shafer \& Vovk's game-theoretic formulation of probability theory \cite{shaferProbabilityFinanceIt2005, shaferGameTheoreticFoundationsProbability2019} and Japaridze's ``Computability Logic'' \cite{japaridzeBeginningWasGame2009, japaridzeSurveyComputabilityLogic2015}.

Our work has a range of applications: the straightforward ones are to the study of logical uncertainty and relatedly bounded rationality, and to designing actual practical prediction markets for such sentences (e.g. prediction markets for various practically relevant statistical questions may be expressed in terms of limiting distributions of experimental results, thus as $\Pi_3$ sentences). With practical markets, however, special attention must be taken to take into account asymmetries in the computational costs of players for opposing sides: in our theoretical framework, every possible player was enumerated in the market, however practically some players may be ``more expensive'' than others, and this may influence how much traders are willing to pay for a side. For example, this may be addressed by ``separating out'' the market for players from the prediction market, so that traders are explicitly paying a price for purchasing players and this price can be added to the probability estimate for a sentence. In particular, this might also make it feasible to implement such systems with an automated market maker (e.g. logarithmic market scoring \cite{hansonLogarithmicMarketScoring2002, hansonCombinatorialInformationMarket2003}), as the obstacle to doing this immediately with our framework is that it's not obvious what player such a market-maker would use. 

There are two other potential implications of our work deserving of attention.

Firstly, in formal logic, where it provides a new approach to measuring the strength of a mathematical theory: in a slightly modified version of game semantics that allows one player to ``go back and change its moves'' \cite{bonnayPreuvesJeuxSemantiques2004}, a formal proof from a mathematical theory may be understood as a computable strategy to this modified verification-falsification game \cite{boyerProofTruth2012b}; we my then consider the maximum wealth that can be acquired by an agent that only trades and plays in accordance with a theory, and regard this as a measure of the strength of the theory.

Second, our work serves as a toy example or entrypoint toward a much more ambitious research problem: \emph{betting on sentences in the latent space}. This problem has been discussed e.g. by \cite{tailcalledLatentVariablesPrediction2023, christianoElicitingLatentKnowledge2021, baillonBayesianMarketsElicit2017, baillonSimpleBetsElicit2021} wherein partial approaches have been highlighted; briefly: there is a vast category of sentences (commonly referred to as ``subjective'') that have no obvious basis whatsoever in empirical reality, but yet we consider meaningful, because they provide information on (i.e. are correlated with) things that do have some basis in empirical reality -- e.g. a historical claim, like ``Bob committed the crime'', will not in itself be revealed by any future empirical observation, but it may be correlated with other, more useful sentences like ``evidence of Bob's wrongdoings will be found'', ``if Bob is imprisoned, lives will be saved'' or ``if you hire Bob, you might find office supplies missing''. One may say that these useful sentences are all correlated, and therefore ``Bob committed the crime'' is a useful fiction constructed to capture the signal in the noisy data -- i.e. \emph{a variable in the latent space}. A comprehensive framework to address bets on such variables -- i.e. a market that decides what variables should exist in the latent space, then provides arbitration for betting on them -- would serve essentially the same function as an intelligent agent. 

\printbibliography

\end{document}